\numberwithin{equation}{section}
\newtheorem{defi}{Definition}[section]
\newtheorem{thm}[defi]{Theorem}
\newtheorem{cor}[defi]{Corollary}
\newtheorem{prop}[defi]{Proposition}
\begin{document}

\title[SDE in Random Population Growth]
{SDE in Random Population Growth}
\author{Raouf Ghomrasni}
\address{Programme in Advanced Mathematics of Finance, School of Computational \& Applied Mathematics, University of the Witwatersrand, Private Bag 3, Wits, 2050 Johannesburg, South Africa.}
\email{Raouf.Ghomrasni@wits.ac.za}
\author{Lisa Bonney}
\address{Programme in Advanced Mathematics of Finance, School of Computational \& Applied Mathematics, University of the Witwatersrand, Private Bag 3, Wits, 2050 Johannesburg, South Africa.} \email{Lisa.Bonney@students.wits.ac.za}
\keywords{Population growth; It\^o calculus; Stratonovich calculus; Random
environments; Stochastic differential equations.}

\date{}

\maketitle

\begin{abstract}
\noindent In this paper we extend the recent work of C.A. Braumann \cite{B2007} to the case of stochastic differential equation with random coefficients. Furthermore, the relationship of the It\^o-Stratonovich stochastic calculus to studies of  random population growth is also explained.
\end{abstract}


\section{Introduction}

As mentioned in the paper by Carlos A. Braumann \cite{B2007}, there is the issue of the It\^o-Stratonovich controversy. That is, the issue of which stochastic calculus, It\^o or Stratonovich, to adopt in the SDE models of the population growth. It has been shown that It\^o and Stratonovich calculus give different results and do not yield the same solutions to the SDE models, leading to this controversy on which calculus is more appropriate when modelling population growth in a random environment. Hence, creating an obstacle on the use of this modelling approach. Carlos A. Braumann clears up the confusion concerning this controversy by showing that the apparent difference between the It\^o and Stratonovich calculus is due to the confusion based on the assumption that both It\^o and Stratonovich employ the same type of mean rates, i.e. interpreting the mean rate as an unspecified ``average" per capita growth rate. In fact, It\^o and Stratonovich calculus will yield exactly the same results when coupled with the appropriate mean rate. It is proven that, when using It\^o calculus, $b(N)$ is the arithmetic average growth rate $R_a(x,t)$, and when using Stratonovich calculus, $b(N)$ is the geometric average growth rate $R_g(x,t)$.\\

\section{The Model}

\noindent Let $N = N(t)$ denote the population size (number of individuals, density) at time $t$ of a closed population (no migrations) and assume that the initial population size $N(0) = N_0 >0$ is known. In a randomly varying environment, we shall refer to $\frac{dN}{dt}$ as the {\emph{total population growth rate}} and to the {\emph{per capita growth rate}}, $\frac{1}{N} \frac{dN}{dt}$, simply by the {\emph{growth rate}}. \\

\noindent We can model the dynamics by assuming that the growth rate $\frac{1}{N} \frac{dN}{dt}$ is the sum of an ``average" growth rate $b(N)$ and perturbations caused by random environmental fluctuations. We can approximate these perturbations by a white noise $\sigma \xi (t)$. Where the growth rate $\frac {1}{N} \frac{dN}{dt}$ is for some $N > 0$, some density-dependent function $b(N)$ (which is the growth rate of the population) having a continuous derivative.\\ 

\noindent We can model the dynamics using the stochastic differential equation (SDE):

\begin{equation} 
\label{eq:model1} 
\frac{1}{N} \frac{dN}{dt} = b(N) + \sigma \xi (t), \qquad N(0) = N_0 > 0, 
\end{equation}

\noindent where we write the per capita growth rate $\frac{1}{N} \frac{dN}{dt}$ as an ``average" density-dependent rate $b(N)$ perturbed by a white noise $\sigma \xi (t)$, to take into account the effect of random environmental fluctuations. \\

\noindent The $b(N)$ represents the ``average" growth rate in the population size. The $\sigma \xi (t)$ represents the random and uncertain movement in the population size which can be attributed to random environmental fluctuations that perturb the per capita growth rate. This perturbation is assumed to be a stationary stochastic process and can be reasonably approximated by a white noise process. $\sigma(t) > 0$ is the volatility and can be regarded as adding noise or variability to the fluctuations in the population size, and $\xi(t)$ is a standard Gaussian white noise process. Furthermore, $\xi (t)$ is a generalised derivative of the standard Brownian motion process $W(t)$ and is therefore equal to $\frac{dW(t)}{dt}$.\\

\noindent The SDE given in (\ref{eq:model1}) can be rewritten as the following:

\begin{equation} 
\label{eq:model2} 
\frac{dN}{dt} = b(t,\omega) N(t) + \sigma(t,\omega) N(t) \xi (t).
\end{equation}

\noindent Here, $\omega \in \Omega$ represents the random environmental scenario (event) in the set $\Omega$ of all possible environmental scenarios, on the probability space structure, $(\Omega,\mathcal{F},\mathbb{P})$. By a scenario $\omega$, we mean a specific combination of environmental conditions that a population might be subjected to.

\noindent We can rewrite this differential equation (\ref{eq:model2}) in terms of a Brownian motion process, where we simply substitute $\xi (t) dt$ with $dW_t$, since $\int_0^t \xi (s) ds = W(t) = \int_0^t dW(s)$. Hence, the {\emph{basic model}} of the population growth in a random environment is given by:

\begin{equation}
\label{eq:model}
d N(t) = N(t) \Big[ b(t,\omega)dt + \sigma(t,\omega) dW_t \Big], \qquad N(0)= N_0>0,
\end{equation}

\noindent or in integral form 
\begin{equation}
\label{eq:intmodel}
N(t) = N_0 + \int_0^t N(s)b(s)ds + \int_0^t N(s)\sigma(s) dW_s,
\end{equation}

\noindent where $W$ is a standard one-dimensional Brownian motion. $b(\cdot,\omega)$ and $\sigma(\cdot,\omega)$ are assumed to be adapted and satisfy the integrability condition:

\begin{equation*}
\int_0^T (\vert  b(t) \vert + \sigma(t)^2) dt < \infty
\end{equation*}

\noindent almost surely, for every $T \in (0,\infty)$. This integrability condition ensures that $N(t)<\infty \quad \forall t \in \mathbb{R}^+$.\\

\noindent Since $N(t)$ is continuous, the first integral in Equation (\ref{eq:intmodel}), can be defined as a Riemann integral. However, problems now arise with the definition of the second integral of Equation (\ref{eq:intmodel}), in which $W(t)$ oscillates too rapidly to be defined in the usual Riemann-Stieljes sense (which follows ordinary calculus rules). The second integral in Equation (\ref{eq:intmodel}) contains the Brownian motion component which defines this integral as a stochastic integral. This integral cannot usually be defined as a classical Riemann-Stieltjes integral due to the fact that the limits of Riemann-Stieltjes sums differ according to the choice of intermediate points in the integrand function. As a result, there are many alternative definitions of these stochastic integrals according to the choice of such intermediate points. \\

\noindent The most commonly used integrals in the literature are the {\bf{It\^o}} and {\bf{Stratonovich}} integrals. The It\^o integral has nice probabilistic properties, which includes, apart from being a martingale, the property of zero expectation as well as having a convenient expression for its variance. However, it follows non-ordinary calculus rules. Stratonovich calculus, on the other hand, follows ordinary calculus rules. We will examine the problems that arise in the interpretation of (\ref{eq:model}) when the SDE is taken in the It\^o and Stratonovich sense. \\

\noindent In the next two sections, we will introduce the important concepts of the It\^o and Stratonovich calculus respectively. Thereafter, in Section 5, we will discuss and represent the relationship between It\^o and Stratonovich calculus. In Section 6, the controversy itself will be discussed along with its resolution.

\section{It\^o Calculus}

In It\^o calculus we can express the It\^o SDE in the following form:
\begin{equation*} dX_t = \mu (X_t,t) dt + \sigma (X_t,t) dW_t, \end{equation*}
\noindent or equivalently in integral form as
\begin{equation*} X_t = X_0 + \int_{0}^{t} \mu (X_s,s) ds + \int_{0}^{t} \sigma (X_s,s) dW_s. \end{equation*}

\begin{defi} {\bf{The It\^o Integral}} \\
Suppose that $W(t)$ is a Brownian motion process and that $X(t)$ is a stochastic process. Consider a partition of $[0,T]$, $0=t_0 < t_1 < \cdots < t_n = T$, then the It\^o integral of $X$ w.r.t. $W$ is a random variable
\begin{equation*}  \int_{0}^{T} X_t dW_t := \lim_{n \rightarrow \infty} \sum_{j=0}^{n-1} X_{t_j} ( W(t_{j+1}) - W(t_j) ).
\end{equation*}

\end{defi}

\noindent Notice, in the summation, the function $X$ is defined at the left-hand point, i.e. the value of $X$ at the beginning of each timestep is used, this is of crucial importance.

\begin{thm} {\bf{It\^o's Lemma}}\\
Let $X(t)$ be a generalised Brownian motion process or an It\^o process. That is, let $X(t)$ have the following dynamics
\begin{equation*} \label{Itoprocess} dX(t) = a(X_t,t)dt + b(X_t,t)dW_t, \end{equation*}
where $W_t$ is a Brownian motion process.\\
Let $F(X_t,t)$ be a function with continuous second derivatives, where $F$ and $X$ have a functional dependence. \\
Then $F(X_t,t)$ is also an It\^o process and has the following dynamics

\begin{align*} 
dF(X_t,t) &= \frac{\partial F}{\partial t} (x,t) dt + \frac{\partial F}{\partial x} (x,t) dX(t) + \frac{1}{2} \frac{\partial ^2 F}{\partial x^2} (x,t) (dX(t))^2 \\
&= \left ( \frac{\partial F}{\partial t} (x,t) + a(x,t) \frac{\partial F}{\partial x} (x,t) + \frac{1}{2} b^2(x,t) \frac{\partial^2 F}{\partial x^2} (x,t) \right ) dt + b(x,t) \frac{\partial F}{\partial x} (x,t) dW_t.
\end{align*}
\end{thm}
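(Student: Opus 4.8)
The plan is to establish the formula by a second-order Taylor expansion carried out along a refining sequence of partitions, and then to identify the limit of each resulting sum using the quadratic-variation properties of Brownian motion. First I would fix $t>0$ and a partition $0=t_0<t_1<\cdots<t_n=t$ with mesh tending to zero, exactly as in the definition of the It\^o integral, and write the total increment of $F$ as the telescoping sum
\begin{equation*}
F(X_t,t)-F(X_0,0)=\sum_{j=0}^{n-1}\Big[F(X_{t_{j+1}},t_{j+1})-F(X_{t_j},t_j)\Big].
\end{equation*}
Applying Taylor's theorem to each summand through second order in the space variable and first order in time, and writing $\Delta X_j=X_{t_{j+1}}-X_{t_j}$, $\Delta t_j=t_{j+1}-t_j$, yields
\begin{equation*}
F(X_t,t)-F(X_0,0)=\sum_j\frac{\partial F}{\partial t}\,\Delta t_j+\sum_j\frac{\partial F}{\partial x}\,\Delta X_j+\frac{1}{2}\sum_j\frac{\partial^2 F}{\partial x^2}\,(\Delta X_j)^2+(\text{remainder}),
\end{equation*}
where the partial derivatives are evaluated at $(X_{t_j},t_j)$ and the remainder collects the genuinely higher-order Taylor terms.

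Next I would identify the limit of each sum. The first sum converges to the Riemann integral $\int_0^t\frac{\partial F}{\partial t}\,ds$, and the second converges, by the very definition of the It\^o integral together with the Riemann integral, to $\int_0^t\frac{\partial F}{\partial x}\,dX(s)=\int_0^t a\,\frac{\partial F}{\partial x}\,ds+\int_0^t b\,\frac{\partial F}{\partial x}\,dW_s$, since $\Delta X_j\approx a\,\Delta t_j+b\,\Delta W_j$ with $\Delta W_j=W(t_{j+1})-W(t_j)$. The decisive term is the third. Expanding $(\Delta X_j)^2=a^2(\Delta t_j)^2+2ab\,\Delta t_j\,\Delta W_j+b^2(\Delta W_j)^2$, the first two pieces are of higher order and vanish in the limit, while the crucial piece is $\sum_j\frac{\partial^2 F}{\partial x^2}\,b^2(\Delta W_j)^2$. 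The heart of the argument is the claim that $(\Delta W_j)^2$ may be replaced by $\Delta t_j$, so that this sum converges to $\int_0^t\tfrac{1}{2}b^2\frac{\partial^2 F}{\partial x^2}\,ds$; this is precisely the origin of the It\^o correction term and makes rigorous the informal rule $(dW_t)^2=dt$.

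The main obstacle is therefore making this replacement rigorous, that is, proving the quadratic-variation statement $\sum_j g(X_{t_j},t_j)\big[(\Delta W_j)^2-\Delta t_j\big]\to 0$ in $L^2$ (hence in probability) for the continuous coefficient $g=b^2\frac{\partial^2 F}{\partial x^2}$. I would treat this by computing the $L^2$ norm of the sum: using that $(\Delta W_j)^2-\Delta t_j$ has conditional mean zero given the past, the cross terms drop out by orthogonality of the increments, and with $\mathbb{E}\big[((\Delta W_j)^2-\Delta t_j)^2\big]=2(\Delta t_j)^2$ the expected square is bounded by a constant times $\sum_j(\Delta t_j)^2\le t\cdot\max_j\Delta t_j\to 0$ as the mesh shrinks. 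Parallel but easier estimates dispose of the $(\Delta t_j)^2$ and $\Delta t_j\,\Delta W_j$ contributions and of the Taylor remainder, the latter using uniform continuity of $\frac{\partial^2 F}{\partial x^2}$ together with the path-continuity of $X$. Assembling these limits gives the integral form, and reading it off in differential notation recovers the stated identity. I would expect the entire technical weight to rest on this $L^2$ estimate, with the extra care demanded by the present random-coefficient setting being that $a$, $b$ and the derivatives of $F$ are only adapted rather than deterministic, which is accommodated by conditioning on $\mathcal{F}_{t_j}$ at each step.
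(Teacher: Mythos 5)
The paper does not actually prove this theorem: It\^o's Lemma is stated in Section 3 as a known result and used thereafter (the only argument of this flavour the paper carries out in detail is the It\^o--Stratonovich conversion formula in Theorem 5.1, which likewise rests on the heuristic $\Delta X_t\,\Delta W_t = \sigma\,\Delta t$). So there is nothing in the source to compare your proof against line by line; what you have written is the standard textbook proof, and its architecture is correct: telescoping sum, second-order Taylor expansion, identification of the first-order sums with the Riemann and It\^o integrals, and the $L^2$ quadratic-variation estimate $\mathbb{E}\bigl[\bigl(\sum_j g_j((\Delta W_j)^2-\Delta t_j)\bigr)^2\bigr]\le C\sum_j(\Delta t_j)^2\to 0$ as the step that turns the informal rule $(dW_t)^2=dt$ into the It\^o correction term. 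You have correctly located the technical weight of the argument there.

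Two points deserve more care if you were to write this out in full. First, the clean $L^2$ computation requires the integrand $g=b^2\,\partial^2F/\partial x^2$ evaluated along the path to be not just $\mathcal{F}_{t_j}$-measurable but square-integrable with a uniform bound; since $F$ is only assumed $C^{1,2}$ and $a,b$ are random, you need a localization step (stop $X$ on exit from a compact set, prove the formula for the stopped process, then let the stopping times increase to infinity) before the orthogonality-of-increments argument applies. Conditioning on $\mathcal{F}_{t_j}$ alone does not supply the needed integrability. Second, the replacement $\Delta X_j\approx a\,\Delta t_j+b\,\Delta W_j$ hides an error term (the coefficients are themselves varying over $[t_j,t_{j+1}]$) that must be absorbed into the remainder using continuity of the paths and of $a,b$; this is routine but should be stated, especially in the random-coefficient setting the paper emphasizes. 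With those additions your outline becomes a complete proof.
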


\noindent Hence, $F$ is also an It\^o process, but with adjusted drift rate given by $\frac{\partial F}{\partial t} (x,t) + a(x,t) \frac{\partial F}{\partial x} (x,t) + \frac{1}{2} b^2(x,t) \frac{\partial^2 F}{\partial x^2} (x,t)$ and a scaled variance, $b^2(x,t) \frac{\partial F}{\partial x} (x,t)$.\\

\noindent We may use It\^o Lemma to solve (\ref{eq:model}), in the form 

\begin{equation*}
\label{eq:model-log}
d \ln N(t) = \gamma(t) dt + \sigma(t) dW_t, \qquad N(0)= N_0>0,
\end{equation*}

\noindent where

\begin{equation}
\label{eq:growth-rate}
\gamma(t):=  b(t)- \frac{1}{2}\sigma^2(t) ,
\end{equation}

\noindent or equivalently:
\begin{equation}
\label{eq:model-sol}
N(t) = N_0 \exp \left \{ \int_0^t \gamma(s)ds + \int_0^t \sigma(s) dW_s \right \}.
\end{equation}

\noindent We note from (\ref{eq:model-sol}) that $N(t)>0$ for all $t>0$ provide that $N_0>0$.

\noindent We shall refer to the quantity of (\ref{eq:growth-rate}) as \emph{the rate of growth} of the population $N$, because of the a.s. relationship
\begin{equation}
\label{eq:model-lim}
\lim_{T \to \infty} \frac{1}{T} \Big( \ln N(T) - \int_0^T \gamma(s)ds\Big) =0, 
\end{equation}

\noindent valid when the variance $a(\cdot)=\sigma^2(\cdot)$ is bounded, uniformly in
$(t, \omega)$; this follows from the strong law of large numbers and from the representation of (local) martingales as time-changed Brownian motions.

\section{Stratonovich Calculus}

In probability theory, the Stratonovich integral is a stochastic integral, the most common alternative to the It\^o integral. The appeal of Stratonovich calculus is that in certain circumstances, integrals in the Stratonovich definition are easier to manipulate. Unlike the It\^o integral counterpart, it is defined such that the chain rule of ordinary calculus holds for the stochastic integrals. Perhaps the most common situation in which these are encountered is as the solution to SDEs. These Stratonovich SDEs are equivalent to It\^o SDEs, apart from the notation $\sigma (X_t,t) \circ dW_t$, where the $``\circ "$ simply indicates that we are working in the Stratonovich sense. Furthermore, it is possible to convert between the two whenever one definition is more convenient for our purposes. \\

\noindent In Stratonovich calculus we can express the Stratonovich SDE in the following form
\begin{equation*} dX_t = \mu (X_t,t) dt + \sigma (X_t,t) \circ dW_t, \end{equation*}
\noindent or equivalently in integral form as
\begin{equation*} X_t = X_0 + \int_{0}^{t} \mu (X_s,s) ds + \int_{0}^{t} \sigma (X_s,s) \circ dW_s. \end{equation*}

\begin{defi} {\bf{The Stratonovich Integral}} \\
It is defined in a similar manner to the Riemann integral, i.e. as a limit of Riemann sums.\\
Suppose that $W(t)$ is a Brownian motion process and that $X(t)$ is a stochastic process. Consider a partition of $[0,T]$, $0=t_0 < t_1 < \cdots < t_n = T$, then the Stratonovich integral of $X$ w.r.t. $W$ is a random variable
\begin{equation*}  \int_{0}^{T} X_t \circ dW_t := \lim_{n \rightarrow \infty} \sum_{j=0}^{n-1} X_{(t_{j+1} + t_{j})/2} ( W(t_{j+1}) - W(t_j) ).
\end{equation*}

\end{defi}

\noindent Here, the function $X$ is evaluated in the middle of each timestep (i.e. choose value of process at midpoint of each subinterval). In the definition of the It\^o integral, the same procedure is used except for choosing the value of the process $X$ at the left-hand point of each subinterval, i.e. $ X_{t_j}$ in place of $X_{(t_{j+1} + t_j)/2}$.\\

\noindent In a similar fashion to It\^o's Lemma, Stratonovich satisfies
\begin{equation*} dF(X_t,t) = \left ( \frac{\partial F}{\partial t} (x,t) + a(x,t) \frac{\partial F}{\partial x} (x,t) \right ) dt + b(x,t) \frac{\partial F}{\partial x} (x,t) dW_t. 
\end{equation*}

\section{The It\^o-Stratonovich Relationship}

\begin{thm} {\bf{Conversion Formula}}\\
Let $X$ be a stochastic process, in particular, an It\^o process satisfying the SDE, $dX_t = a(X_t,t) dt + \sigma (X_t,t) dW_t$. Let $\sigma$ be some function of $X$ and $t$. Furthermore, to make apparent the distinction between the two integrals, we will adopt the subscript ${\boldsymbol{I}}$ to indicate an {\bf{It\^o integral}} and the subscript ${\boldsymbol{S}}$ to indicate a {\bf{Stratonovich integral}}.\\

\noindent Then conversion between the It\^o and Stratonovich integrals may be performed using the formula
\begin{equation} \int_{0}^{T} \sigma_{\boldsymbol{S}} (X_t,t) dW_t = \int_{0}^{T} \sigma_{\boldsymbol{I}} (X_t,t) dW_t + \frac{1}{2} \int_{0}^{T} \frac{\partial \sigma}{\partial x} (x,t) \sigma (X_t,t) dt. \end{equation} 
\end{thm}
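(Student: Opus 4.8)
The plan is to return to the defining Riemann sums of the two integrals and compare them termwise. Fix a partition $0 = t_0 < t_1 < \cdots < t_n = T$, write $m_j = (t_{j+1}+t_j)/2$ for the midpoint of the $j$-th subinterval, and abbreviate $\Delta W_j = W(t_{j+1}) - W(t_j)$. By definition the Stratonovich integral is the limit of $\sum_j \sigma(X_{m_j}, m_j)\,\Delta W_j$, whereas the It\^o integral is the limit of $\sum_j \sigma(X_{t_j}, t_j)\,\Delta W_j$. The whole task is therefore to account for the discrepancy between evaluating the integrand at the midpoint and at the left endpoint of each subinterval.

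First I would Taylor-expand the integrand about the left endpoint,
\[
\sigma(X_{m_j}, m_j) = \sigma(X_{t_j}, t_j) + \frac{\partial \sigma}{\partial x}(X_{t_j}, t_j)\,(X_{m_j} - X_{t_j}) + (\text{higher order}),
\]
and then substitute the It\^o dynamics $dX_t = a(X_t,t)\,dt + \sigma(X_t,t)\,dW_t$ for the increment over the half-interval $[t_j,m_j]$, namely $X_{m_j} - X_{t_j} \approx a(X_{t_j},t_j)(m_j - t_j) + \sigma(X_{t_j},t_j)(W_{m_j} - W_{t_j})$. The drift part contributes a term of order $(m_j-t_j)\,\Delta W_j$, which is negligible as the mesh shrinks, so only the martingale part survives. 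Inserting these expansions, the Stratonovich sum splits as
\[
\sum_j \sigma(X_{t_j},t_j)\,\Delta W_j + \sum_j \frac{\partial \sigma}{\partial x}(X_{t_j},t_j)\,\sigma(X_{t_j},t_j)\,(W_{m_j}-W_{t_j})\,\Delta W_j + (\text{remainder}).
\]
The first sum converges to the It\^o integral by definition, so everything reduces to identifying the limit of the second sum.

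The decisive step is to evaluate $(W_{m_j}-W_{t_j})\,\Delta W_j$. Writing $\Delta W_j = (W_{m_j}-W_{t_j}) + (W_{t_{j+1}}-W_{m_j})$ gives
\[
(W_{m_j}-W_{t_j})\,\Delta W_j = (W_{m_j}-W_{t_j})^2 + (W_{m_j}-W_{t_j})(W_{t_{j+1}}-W_{m_j}).
\]
The cross term is a sum of products of independent mean-zero Brownian increments and vanishes in $L^2$ as the mesh tends to zero, while $(W_{m_j}-W_{t_j})^2$ has expectation $m_j - t_j = \frac{1}{2}(t_{j+1}-t_j)$ and, by the usual quadratic-variation argument, may be replaced by this deterministic value in the mean-square limit. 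Hence the second sum converges to $\frac{1}{2}\int_0^T \frac{\partial \sigma}{\partial x}(X_t,t)\,\sigma(X_t,t)\,dt$, which is precisely the correction term in the statement.

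I expect the main obstacle to be the rigorous control of the limits rather than the formal computation: one must verify, under the standing integrability hypotheses on $a$ and $\sigma$ and using the continuity of $\partial\sigma/\partial x$, that the Taylor remainders and the drift-induced terms are genuinely negligible, and that the replacement of $(W_{m_j}-W_{t_j})^2$ by its expectation is valid in $L^2$. These are the standard but delicate estimates underlying It\^o's theory; once they are in place the conversion formula follows at once from the termwise comparison above.
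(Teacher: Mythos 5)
Your argument is correct and follows the same overall strategy as the paper's proof: compare the defining Riemann sums term by term, expand the Stratonovich evaluation point around the left endpoint to first order, and let the quadratic variation of $W$ produce the $\tfrac12\,dt$ correction. The one genuine difference lies in which form of the Stratonovich sum you expand and, consequently, where the factor $\tfrac12$ comes from. The paper replaces the midpoint evaluation by the arithmetic mean of the endpoint values, $\sigma\bigl(\tfrac{X(t_j)+X(t_{j+1})}{2},t\bigr)$, applies the Mean Value Theorem to get $\tfrac12\,\partial_x\sigma\,(X(t_{j+1})-X(t_j))\,\Delta W_j$, and then invokes the covariation rule $\Delta X\,\Delta W=\sigma\,\Delta t$ over the full subinterval; the $\tfrac12$ is purely algebraic. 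You instead keep the midpoint-in-time evaluation $X_{(t_j+t_{j+1})/2}$ --- which is actually the form appearing in the paper's own Definition 4.1 --- substitute the SDE for the half-interval increment, and extract the $\tfrac12$ from $\mathbb{E}\bigl[(W_{m_j}-W_{t_j})^2\bigr]=\tfrac12(t_{j+1}-t_j)$ together with the vanishing of the cross term $(W_{m_j}-W_{t_j})(W_{t_{j+1}}-W_{m_j})$. Your route is more faithful to the stated definition and makes explicit two points the paper passes over in silence (the negligibility of the drift contribution and of the Taylor remainder); the paper's route is shorter because the mean-of-values form hands it the $\tfrac12$ for free. Both arguments sit at the same heuristic level of rigor, and the $L^2$ estimates you flag at the end are exactly what would be needed to make either version airtight.
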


\begin{proof}
\begin{align*} \int_{0}^{T} \sigma_{\boldsymbol{S}} (X_t,t) dW_t - &\int_{0}^{T} \sigma_{\boldsymbol{I}} (X_t,t) dW_t = \\
&\lim_{n \rightarrow \infty} \sum_{j=0}^{n-1} \left ( \sigma \left ( \frac{X(t_j) + X(t_{j+1})}{2},t \right ) - \sigma \left ( \frac{X(t_j) + X(t_j)}{2},t \right ) \right ) \left ( W(t_{j+1}) - W(t_j) \right ) 
\end{align*}

\noindent By the Mean Value Theorem, we obtain
\begin{align*} \qquad &= \lim_{n \rightarrow \infty} \sum_{j=0}^{n-1} \frac{\partial \sigma}{\partial x} (X,t) \left ( \frac{X(t_j) + X(t_{j+1})}{2} - X(t_j) \right ) \left (W(t_{j+1}) - W(t_j) \right ) \\
&= \frac{1}{2} \lim_{n \rightarrow \infty} \sum_{j=0}^{n-1} \frac{\partial \sigma}{\partial x} (X,t) (X(t_{j+1}) - X(t_j))(W(t_{j+1}) - W(t_j))  \\
&= \frac{1}{2} \lim_{n \rightarrow \infty} \sum_{j=0}^{n-1} \frac{\partial \sigma}{\partial x} (X,t) \Delta X_t \Delta W_t \\
&= \frac{1}{2} \lim_{n \rightarrow \infty} \sum_{j=0}^{n-1} \frac{\partial \sigma}{\partial x} (X,t) \sigma(X,t) \Delta t \\
&= \frac{1}{2} \lim_{n \rightarrow \infty} \sum_{j=0}^{n-1} \frac{\partial \sigma}{\partial x} (X,t) \sigma(X,t) (t_{j+1} - t_j).
\end{align*}

\noindent Which is the Riemann sum of
\begin{equation*} \frac{1}{2} \int_{0}^{T} \frac{\partial \sigma}{\partial x} (x,t) \sigma (X_t,t) dt. 
\end{equation*}
\noindent Giving us the required result.
\end{proof}

\noindent From Theorem 5.1, it is evident that the It\^o SDE, 
\begin{equation} dX(t) = a(X_t,t)dt + \sigma (X_t,t)dW_t \end{equation} 
is equivalent to the Stratonovich SDE, 
\begin{equation} \label{eq:convert1} dX(t) = \left [ a(X_t,t) - \frac{1}{2} \sigma(X_t,t) \frac{\partial \sigma}{\partial x} (X_t,t) \right ]dt + \sigma(X_t,t) dW_t \end{equation}

\noindent and that the Stratonovich SDE, 
\begin{equation} dX(t) = \alpha (X_t,t)dt + \beta (X_t,t)dW_t \end{equation}
is equivalent to the It\^o SDE
\begin{equation} \label{eq:convert2} dX(t) = \left [ \alpha (X_t,t) + \frac{1}{2} \beta (X_t,t) \frac{\partial \beta}{\partial x} (X_t,t) \right ] dt + \beta (X_t,t) dW_t. \end{equation}

\noindent In the population dynamics context, the Stratonovich SDE, given by

\begin{equation} 
\label{eq:Strat2}
dN(t) = b(t)N(t)dt +  \sigma (t) N(t) dW(t).
\end{equation}

\noindent Using the It\^o-Stratonovich conversion formula (\ref{eq:convert2}), we have $a(N_t,t) = b(t)N(t)$ and $\sigma (N_t,t) = \sigma(t) N(t)$, where $\frac{\partial \sigma}{\partial x} (N_t,t) = \sigma (t)$. Hence, (\ref{eq:Strat2}) is equivalent to the It\^o SDE

\begin{align} 
\label{eq:Ito2}
dN(t) &= \left [b(t)N(t) - \frac{1}{2} \sigma(N_t,t) \sigma(t) \right ]dt +  \sigma (N_t,t) N(t) dW(t)\\
&= \left [b(t)N(t) - \frac{1}{2} \sigma^2 (t) N(t) \right ] dt + \sigma(t) N(t) dW_t.
\end{align}

\noindent In a similar fashion we can obtain the reverse conversion formula.

\section{The Controversy}

\noindent Many qualitative differences have been uncovered between It\^o and Stratonovich calculus. In particular, there are instances in which Stratonovich calculus predicts, for the population, non-extinction and the existence of a stochastic equilibrium, whereas, at the same time, It\^o calculus will predict population extinction. So, it seems, which calculus one uses does have important consequences. This fact has resulted in there being much controversy over which calculus is more appropriate to employ when finding a solution to the SDE.\\

\noindent Considering the dramatic differences in predictions concerning important issues like extinction, which calculus should one trust? This is a major obstacle to the use of these stochastic models.\\

\noindent Braumann (2003) resolved the issue of the It\^o-Stratonovich controversy for the {\bf{density-independent}} growth model, where $b(N) \equiv b$ is identically constant, in a random population environment. Braumann then extended these results in a random environment for the general {\bf{density-dependent}} population growth model. It is revealed that the possible reason for this controversy, is the subtle fact that the same per capita ``average" growth rate, `$b$' is used in both the It\^o and Stratonovich calculus. Therefore, the issue here regards the meaning and interpretation of this average, since it is not elucidated what type of ``average" is being referred to. Furthermore, it is inherently assumed that both the It\^o and Stratonovich calculus make use of the {\emph{same}} average, this is of course an incorrect assumption!\\

\noindent Hence, this issue of the ``average" needs to be addressed and clarification needs to be made of what type of average each method uses. In fact, it is found that the interpretation of $b$ is different when considering population dynamics. When one decides to use It\^o calculus in obtaining a solution to the SDE, $b$ is interpreted as the {\emph{arithmetic average}} growth rate. However, if ones chooses to implement Stratonovich calculus, $b$ is interpreted as the {\emph{geometric average}} growth rate. The differences between these two types of averages results in the dramatic differences between the It\^o and Stratonovich calculus to disappear, yielding exactly the same solutions in both instances.\\

Thus the differences are merely due to the absence of clarification of the meaning of $b$. So, all that is required is to match the appropriate average with the correct type of calculus, and exact same results will be obtained, putting to rest the It\^o-Stratonovich controversy.

\subsection{Types of Averages}

We will denote by $\mathbb{E}_{t,x}[\cdot] := \mathbb{E} [\cdot \vert N(t) = x]$, as the expectation  conditioned on the knowledge that at time $t$ the population size $N(t)$ is $x$.

\subsubsection{The Arithmetic Average}

\noindent The first type of average we have already mentioned is the arithmetic average. It is simply given by the usual expected value, conditioned on the knowledge that $N(t) = x$, and given by $\frac{1}{n} \sum_{i=1}^{n} x_i = \mathbb{E} [X]$. 

\begin{defi}
We define the arithmetic average growth rate at time $t$, when the population size at time $t$ is $x$, as
\begin{equation}
\label{eq:arithmetic} R_a(x,t):= \frac{1}{x} \lim_{\Delta t \downarrow 0} \frac{\mathbb{E}_{t,x}[N(t+\Delta t)]-x}{\Delta t}.
\end{equation}
\end{defi}

\begin{thm}
Let $R_a(x,t)$ be the arithmetic average growth rate as defined above, we have equivalently
\begin{equation}
\label{eq:arith-av}
R_a(x,t)=\mathbb{E}_{t,x}[b(t)].
\end{equation}
\end{thm}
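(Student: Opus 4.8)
The plan is to compute the conditional increment $\mathbb{E}_{t,x}[N(t+\Delta t)]-x$ directly from the integral form of the basic model (\ref{eq:intmodel}), and then to extract the per capita rate by normalizing by $x$ and letting $\Delta t \downarrow 0$. First I would restrict the model to the short interval $[t,t+\Delta t]$, writing
\begin{equation*}
N(t+\Delta t) = N(t) + \int_t^{t+\Delta t} N(s)b(s)\,ds + \int_t^{t+\Delta t} N(s)\sigma(s)\,dW_s,
\end{equation*}
and then condition on the event $\{N(t)=x\}$.

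The decisive step is to apply $\mathbb{E}_{t,x}[\cdot]$ to both sides. Because the It\^o integral is a martingale, it has zero expectation, so the stochastic term drops out: $\mathbb{E}_{t,x}\big[\int_t^{t+\Delta t} N(s)\sigma(s)\,dW_s\big]=0$. This is precisely where the It\^o interpretation is used, since the left-endpoint evaluation in the definition of the It\^o integral is exactly what forces this martingale property; under a Stratonovich reading the correction term would survive and the conclusion would differ. What remains is
\begin{equation*}
\mathbb{E}_{t,x}[N(t+\Delta t)]-x = \mathbb{E}_{t,x}\left[\int_t^{t+\Delta t} N(s)b(s)\,ds\right].
\end{equation*}

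Next I would divide by $x\,\Delta t$ and pass to the limit. Since the sample paths of $N$ are continuous with $N(t)=x$, and $b(\cdot)$ is locally integrable by hypothesis, the fundamental theorem of calculus gives $\frac{1}{\Delta t}\int_t^{t+\Delta t} N(s)b(s)\,ds \to N(t)b(t)=x\,b(t)$ as $\Delta t \downarrow 0$. Taking conditional expectations and dividing by $x$ then yields $R_a(x,t)=\mathbb{E}_{t,x}[b(t)]$, which is the claim.

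The hard part will be the rigorous justification of interchanging the limit $\Delta t \downarrow 0$ with the conditional expectation $\mathbb{E}_{t,x}[\cdot]$, and the convergence of the normalized drift integral. This requires a dominated-convergence or uniform-integrability argument supported by the standing condition $\int_0^T(|b(t)|+\sigma(t)^2)\,dt<\infty$ a.s., together with the path continuity of $N$; the vanishing of the stochastic integral's expectation should likewise be underpinned by the same integrability and localization assumptions, which guarantee it is a genuine martingale on $[t,t+\Delta t]$ rather than merely a local martingale.
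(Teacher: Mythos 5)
Your argument is correct, but it takes a genuinely different route from the paper. The paper starts from the closed-form solution $N(t+\Delta t)=N(t)\exp\{\int_t^{t+\Delta t}\gamma(s)\,ds+\int_t^{t+\Delta t}\sigma(s)\,dW_s\}$ with $\gamma=b-\tfrac12\sigma^2$, factors the exponential into a drift part $\exp\{\int b\,ds\}$ times the stochastic exponential $\exp\{\int\sigma\,dW-\tfrac12\int\sigma^2\,ds\}$, and kills the noise by using that this Dol\'eans-type exponential has conditional expectation one; the surviving factor $(\exp\{\int_t^{t+\Delta t}b(s)\,ds\}-1)/\Delta t$ then produces $\mathbb{E}_{t,x}[b(t)]$. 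You instead work additively from the integral form of the SDE and kill the noise via the zero-expectation property of the It\^o integral, then apply a differentiation-of-the-integral argument to the drift term. Your version is more elementary (it never needs the SDE to be solved explicitly) and it makes the role of the It\^o left-endpoint convention more transparent; the paper's version has the advantage of running exactly parallel to its later computation of the geometric average $R_g$, where the same exponential representation is reused. Both proofs share the same unaddressed technicalities, which you at least name explicitly: the stochastic integral (resp.\ the stochastic exponential) is a priori only a local martingale under the standing integrability condition $\int_0^T(|b|+\sigma^2)\,dt<\infty$ a.s., so some localization or uniform-integrability hypothesis is needed to make the expectation vanish (resp.\ equal one); and the final passage $\frac{1}{\Delta t}\int_t^{t+\Delta t}N(s)b(s)\,ds\to xb(t)$ requires either continuity of $b$ in $t$ or a Lebesgue-point argument, an issue the paper's last line suppresses in the same way.
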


\begin{proof}
We have
\begin{equation*}
\label{eq:model-sol-2}
N(t+\Delta t) = N(t) \exp \left \{ \int_t^{t+\Delta t} \gamma(s)ds + \int_t^{t+\Delta t} \sigma(s) dW_s \right \}.
\end{equation*}
It follows
\begin{align*}
R_a(x,t)&= \frac{1}{x} \lim_{\Delta t \downarrow 0}\frac{\mathbb{E}_{t,x}[\ N(t+\Delta t)]-x}{\Delta t}\\
&= \frac{1}{x}\lim_{\Delta t \downarrow 0} \frac{\mathbb{E}_{t,x}[x\exp\{\int_t^{t+\Delta t} \gamma(s)ds + \int_t^{t+\Delta t} \sigma(s) dW_s\}-x]}{\Delta t} \\
&=\lim_{\Delta t \downarrow 0} \frac{\mathbb{E}_{t,x}[(\exp\{\int_t^{t+\Delta t} b(s)ds\}-1 )\times \exp\{ \int_t^{t+\Delta t} \sigma(s) dW_s- \frac{1}{2}\int_t^{t+\Delta t} \sigma^2(s)ds \}]}{\Delta t} \\
&=\mathbb{E}_{t,x}[b(t)].
\end{align*}
\end{proof}

\subsubsection{The Geometric Average}

\begin{defi}
The geometric mean of a positive random variable $X$ is defined as $e^{\mathbb{E}[\ln(X)]}$
\end{defi}

\noindent Since 
\begin{align*} \mu_{geom} &= \left ( \prod_{i=1}^{n} x_i \right )^{\frac{1}{n}}\\
                               &= e^{\ln [ ( \prod x_i)^{\frac{1}{n}}]}\\
                               &= e^{\frac{1}{n} \ln \left ( \prod x_i \right )}\\
                               &= e^{\frac{1}{n} \sum \ln(x_i)}\\
                               &= e^{\mathbb{E} [\ln (X)]}
\end{align*}

\begin{defi}
We define the geometric average growth rate at time $t$, when the population size at time $t$ is $x$, as
\begin{equation}
\label{eq:geometric1} R_g(x,t):= \frac{1}{x} \lim_{\Delta t \downarrow 0}\frac{\exp(\mathbb{E}_{t,x}[\ln N(t+\Delta t)])-x}{\Delta t}.
\end{equation}
\end{defi}

\noindent Hence, the geometric average is obtained by transforming the quantities to be averaged to log scale then taking an ordinary arithmetic average and then revert to the initial scale by inverting the algorithm.

\begin{prop}
Let $R_g(x,t)$ be the geometric average growth rate as defined above, we have equivalently
\begin{equation}
\label{eq:geometric2} R_g(x,t)= \lim_{\Delta t \downarrow 0}\frac{\mathbb{E}_{t,x}[\ln N(t+\Delta t)]-\ln x}{\Delta t}.
\end{equation}
\end{prop}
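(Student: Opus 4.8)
The plan is to recognise both sides as the same first-order (one-sided) expansion in $\Delta t$, and to pass between them using the elementary limit $(\mathrm{e}^u-1)/u \to 1$ as $u\to 0$. To this end I would set
\begin{equation*}
m(\Delta t) := \mathbb{E}_{t,x}[\ln N(t+\Delta t)],
\end{equation*}
and first observe that, since $N$ has a.s. continuous paths with $N(t)=x$ under $\mathbb{E}_{t,x}$, we have $m(\Delta t)\to \ln x$ as $\Delta t\downarrow 0$, so that $m(0^+)=\ln x$. With this notation the claimed identity \eqref{eq:geometric2} reads $R_g(x,t)=m'(0^+)$, the right-hand side being precisely the one-sided derivative of $m$ at $0$.

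Next I would manipulate the defining expression \eqref{eq:geometric1}. Writing $x = \exp(\ln x) = \exp(m(0^+))$, I would factor
\begin{equation*}
\exp(m(\Delta t)) - x = \exp(m(0^+))\bigl(\exp(m(\Delta t)-m(0^+)) - 1\bigr) = x\bigl(\exp(m(\Delta t)-\ln x)-1\bigr).
\end{equation*}
Dividing by $x$ cancels the prefactor, so that
\begin{equation*}
R_g(x,t) = \lim_{\Delta t\downarrow 0} \frac{\exp(m(\Delta t)-\ln x)-1}{\Delta t}.
\end{equation*}

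The final step is to split this difference quotient as
\begin{equation*}
\frac{\exp(m(\Delta t)-\ln x)-1}{\Delta t} = \frac{\exp\bigl(m(\Delta t)-\ln x\bigr)-1}{m(\Delta t)-\ln x}\cdot \frac{m(\Delta t)-\ln x}{\Delta t}.
\end{equation*}
Because $m(\Delta t)-\ln x\to 0$, the first factor tends to $1$ by the standard limit $(\mathrm{e}^u-1)/u\to 1$, while the second factor converges by definition to the limit appearing on the right of \eqref{eq:geometric2}. Multiplying the limits yields $R_g(x,t)=\lim_{\Delta t\downarrow 0}\bigl(\mathbb{E}_{t,x}[\ln N(t+\Delta t)]-\ln x\bigr)/\Delta t$, as required.

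I expect the only genuine subtlety to lie in the first step, namely justifying $m(0^+)=\ln x$ and that the relevant one-sided limit exists; both follow from the explicit representation \eqref{eq:model-sol} of $N(t+\Delta t)$ together with the a.s. continuity of the paths, which force $\ln N(t+\Delta t)\to \ln x$ and control the conditional expectation as $\Delta t\downarrow 0$. Once these limits are known to exist, the remainder is the purely algebraic factorisation above, so the argument is essentially a restatement of the chain rule $\frac{d}{d\Delta t}\exp(m(\Delta t))=\exp(m(\Delta t))\,m'(\Delta t)$ evaluated at $\Delta t = 0$.
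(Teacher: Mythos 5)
Your proposal is correct and takes essentially the same route as the paper: your quantity $m(\Delta t)-\ln x$ is exactly the paper's $z:=\mathbb{E}_{t,x}[\ln (N(t+\Delta t)/x)]$, and both arguments reduce the claim to the factorisation $\exp(m(\Delta t))-x = x(e^{z}-1)$ together with the elementary limit $(e^{z}-1)/z\to 1$ as $z\to 0$. Your added remark about justifying $m(0^+)=\ln x$ via the explicit solution \eqref{eq:model-sol} is a reasonable point the paper glosses over, but it does not change the method.
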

\begin{proof}
This follows from the fact that when $z \to 0$ we have $(e^z-1)/z \to 1$. We apply this result to $z:=\mathbb{E}_{t,x}[\ln (N(t+\Delta t)/x)]$ which tends to $0$ when $\Delta t \downarrow 0$ as follows:
\begin{align}
\lim_{\Delta t \downarrow 0} \frac{\mathbb{E}_{t,x}[\ln (N(t+\Delta t)/x)]}{\Delta t}&= \lim_{\Delta t \downarrow 0} \frac{z}{\Delta t} \nonumber\\
&=\lim_{\Delta t \downarrow 0} \frac{(e^z-1)}{z} \times \frac{z}{\Delta t}\nonumber \\
&=\lim_{\Delta t \downarrow 0} \frac{(e^z-1)}{\Delta t} \nonumber \\
&=\frac{1}{x}\lim_{\Delta t \downarrow 0} \frac{(xe^z-x)}{\Delta t}\nonumber \\
&=R_g(x,t).\nonumber
\end{align}
\end{proof}

\begin{thm}
Let $R_g(x,t)$ be the geometric average growth rate as defined above, we have equivalently
\begin{equation}
\label{eq:geo-av}
R_g(x,t) =\mathbb{E}_{t,x}[\gamma(t)].
\end{equation}

\noindent where $\gamma(t)$ is as defined in Equation (\ref{eq:growth-rate}).

\end{thm}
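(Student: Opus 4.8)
The plan is to start from the alternative characterization of $R_g(x,t)$ furnished by the preceding Proposition,
\begin{equation*}
R_g(x,t) = \lim_{\Delta t \downarrow 0} \frac{\mathbb{E}_{t,x}[\ln N(t+\Delta t)] - \ln x}{\Delta t},
\end{equation*}
and to evaluate this limit directly from the explicit solution of the model. Taking logarithms in the representation $N(t+\Delta t) = N(t)\exp\{\int_t^{t+\Delta t}\gamma(s)\,ds + \int_t^{t+\Delta t}\sigma(s)\,dW_s\}$ and using that $N(t)=x$ under the conditioning gives
\begin{equation*}
\ln N(t+\Delta t) - \ln x = \int_t^{t+\Delta t}\gamma(s)\,ds + \int_t^{t+\Delta t}\sigma(s)\,dW_s.
\end{equation*}

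Next I would apply $\mathbb{E}_{t,x}[\cdot]$ to both sides. The decisive point is that the It\^o integral over $[t,t+\Delta t]$ is a martingale increment, so that its conditional expectation given the history up to time $t$ vanishes; by the tower property this remains true after the coarser conditioning on the event $\{N(t)=x\}$. Hence the stochastic term drops out and I am left with
\begin{equation*}
\mathbb{E}_{t,x}[\ln N(t+\Delta t)] - \ln x = \mathbb{E}_{t,x}\Big[\int_t^{t+\Delta t}\gamma(s)\,ds\Big].
\end{equation*}

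Finally I would divide by $\Delta t$ and send $\Delta t \downarrow 0$. Invoking the fundamental theorem of calculus together with the continuity of $s \mapsto \gamma(s) = b(s) - \tfrac{1}{2}\sigma^2(s)$, and interchanging the limit with the conditional expectation, the normalized integral converges to the integrand at the left endpoint, giving
\begin{equation*}
R_g(x,t) = \lim_{\Delta t \downarrow 0} \frac{1}{\Delta t}\,\mathbb{E}_{t,x}\Big[\int_t^{t+\Delta t}\gamma(s)\,ds\Big] = \mathbb{E}_{t,x}[\gamma(t)],
\end{equation*}
which is the asserted identity; note that this differs from the arithmetic rate $R_a(x,t) = \mathbb{E}_{t,x}[b(t)]$ precisely by the It\^o correction $-\tfrac{1}{2}\mathbb{E}_{t,x}[\sigma^2(t)]$, exactly as the resolution of the controversy requires.

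The main obstacle I anticipate lies in rigorously justifying the two limit interchanges. First, the claim that the stochastic integral contributes nothing to the conditional mean is immediate when $\sigma$ is bounded (so that the integral is a genuine square-integrable martingale), but under only the almost-sure integrability hypothesis the integral is a priori merely a local martingale, and one must argue that the zero-mean property survives; working in the bounded-variance regime already invoked for Equation (\ref{eq:model-lim}) removes this difficulty. Second, pulling the limit through $\mathbb{E}_{t,x}[\cdot]$ in the final step should be controlled by dominated convergence, using the continuity of $\gamma$ and the integrability of its running average. Once these technical points are dispatched, the algebraic content of the statement is immediate.
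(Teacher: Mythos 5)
Your proposal follows exactly the paper's own argument: invoke the preceding Proposition to rewrite $R_g$ as the normalized log-increment, substitute the explicit exponential solution, kill the stochastic integral by its (conditional) zero-mean property, and pass to the limit in the time-averaged drift. The only difference is that you spell out the technical justifications (local-martingale issue, dominated convergence) that the paper leaves implicit, which is a welcome but inessential refinement.
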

\begin{proof}
We have
\begin{equation*}
\label{eq:model-sol-2bis}
N(t+\Delta t) = N(t) \exp \left \{ \int_t^{t+\Delta t} \gamma(s)ds + \int_t^{t+\Delta t} \sigma(s) dW_s \right \}.
\end{equation*}
By Proposition 6.5 above we have
\begin{align*}
R_g(x,t)&= \lim_{\Delta t \downarrow 0}\frac{\mathbb{E}_{t,x}[\ln N(t+\Delta t)]-\ln x}{\Delta t}\\
&= \lim_{\Delta t \downarrow 0} \frac{\mathbb{E}_{t,x}[\int_t^{t+\Delta t} \gamma(s)ds + \int_t^{t+\Delta t} \sigma(s) dW_s]}{\Delta t} \\
&=\lim_{\Delta t \downarrow 0} \mathbb{E}_{t,x} \left [\frac{1}{\Delta t}\int_t^{t+\Delta t} \gamma(s)ds \right ]\\
&=\mathbb{E}_{t,x}[\gamma(t)].
\end{align*}
\end{proof}

\begin{cor}
When $\gamma(t,\omega)= b(N_t)-\frac{1}{2}\sigma^2$.\\
Let $R_g(x)$ be the geometric average growth rate as defined above, we have equivalently
$$
R_g(x,t)=b(x)-\frac{1}{2}\sigma^2.
$$
\end{cor}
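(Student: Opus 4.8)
The plan is to read off the corollary as the density-dependent specialization of Theorem 6.7 (the identity \eqref{eq:geo-av}). First I would simply invoke that theorem, which gives $R_g(x,t) = \mathbb{E}_{t,x}[\gamma(t)]$ with no additional work: its proof, built on the log-representation $N(t+\Delta t) = N(t)\exp\{\int_t^{t+\Delta t}\gamma(s)ds + \int_t^{t+\Delta t}\sigma(s)dW_s\}$ together with the vanishing in the limit of the martingale increment $\frac{1}{\Delta t}\mathbb{E}_{t,x}[\int_t^{t+\Delta t}\sigma(s)dW_s]$, applies verbatim to any admissible integrand $\gamma$, in particular to the density-dependent one appearing here.

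Next I would substitute the stated form $\gamma(t,\omega) = b(N_t) - \frac{1}{2}\sigma^2$ into this identity and expand using linearity of the conditional expectation $\mathbb{E}_{t,x}[\cdot] = \mathbb{E}[\cdot \mid N(t)=x]$, which reduces the right-hand side to $\mathbb{E}_{t,x}[b(N_t)] - \frac{1}{2}\sigma^2$, treating $\sigma$ as the constant volatility written in the statement (a deterministic constant pulls out of the conditional expectation unchanged).

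The one point that deserves a line of justification — and the only place where anything actually happens — is the identity $\mathbb{E}_{t,x}[b(N_t)] = b(x)$. Here I would argue that conditioning on the event $\{N(t)=x\}$ fixes the random variable $N_t$ to the constant value $x$; since $b$ is a deterministic function (continuous, by the standing model hypotheses), $b(N_t)$ equals $b(x)$ almost surely under the conditional law, so its conditional expectation is the deterministic number $b(x)$. Combining the three steps gives $R_g(x,t) = b(x) - \frac{1}{2}\sigma^2$. I expect no genuine obstacle: the substantive probabilistic analysis is entirely contained in Theorem 6.7, and the corollary amounts to evaluating a function of a quantity that has been pinned down by the conditioning.
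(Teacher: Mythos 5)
Your proposal is correct and matches the paper's (implicit) route: the corollary is stated without proof precisely because it is the immediate specialization of the preceding theorem $R_g(x,t)=\mathbb{E}_{t,x}[\gamma(t)]$, with the conditioning $\{N(t)=x\}$ pinning $b(N_t)$ down to $b(x)$ exactly as you argue. The only nitpick is the label: the identity \eqref{eq:geo-av} is Theorem 6.6 in the paper's numbering, not 6.7.
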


\noindent It seems as though the arithmetic and geometric averages are equivalent, where the geometric average substitutes the process $N(t)$ by the process $\ln N(t)$. These definitions give an indication on how these two rates can be estimated from observed data. For example, to determine an estimate of $R_a(x,t)$, we look at all the instances $t$ for which $N(t)$ is close to $x$ and then we take the average of those $N(t + \Delta t)$ as an approximation of $\mathbb{E}_{t,x} [N(t + \Delta t)]$.

\subsubsection{Other Types of Averages}

\noindent These two averages are not the only averages we can consider, there are many other types of averages that are possible, so when we refer to an ``average" it is of extreme importance to specify which particular average is being referred to. Some other possible averages are the: harmonic, median and quadratic averages.\\

\noindent Median \qquad $\hat{\mu} = med(x_1,\ldots,x_n)$ \\
\noindent Quadratic \quad $\hat{\mu} = \left ( \frac{1}{n} \sum_{i=1}^{n} x_i^2 \right )^\frac{1}{2} = (\mathbb{E} [X^2])^\frac{1}{2}$

\subsection{Density-Independent Growth}

\noindent Recall that there are two main ways in which to interpret SDEs: It\^o calculus and Stratonovich calculus. They usually lead to different solutions. Let us first consider the {\bf{density-independent}} growth rate model in a random environment, this model corresponds to a constant ``average" growth rate, $b(N) \equiv b$,
\begin{equation*} dN(t) = b N dt + \sigma (t) N(t) dW(t). \end{equation*}

\noindent To make the distinction between the two approaches more apparent, we will use the notation $b_{\boldsymbol{I}}$, to denote the ``average" growth rate under {\bf{It\^o calculus}}, and $b_{\boldsymbol{S}}$ to denote the ``average" growth rate under {\bf{Stratonovich calculus}}. 

\subsubsection{It\^o Model}

\noindent Let us consider the density-independent It\^o calculus model:

\begin{equation} \label{eq:mod1} dN(t) = b_{\boldsymbol{I}} N dt + \sigma (t) N(t) dW(t). \end{equation}

\noindent It is convenient to work in the logarithmic scale by making the change of variables, $Y(t) = \ln N(t)$, $y = \ln x$.\\ 

\noindent Applying It\^o's Lemma to Equation (\ref{eq:mod1}), where $F( N(t)) = \ln N(t)$, we obtain
\begin{align} d \ln N(t) &= \frac{\partial F (N)}{\partial x} dN(t) + \frac{1}{2} \frac{\partial ^2 F(N)}{\partial x^2} (d N(t))^2 \\
&= \frac{1}{N(t)} dN(t) + \frac{1}{2} \left (\frac{-1}{N(t)^2} \right ) (dN(t))^2\\
\label{eq:2mod} &= \left ( b_{\boldsymbol{I}} - \frac{1}{2} \sigma^2 (t) \right )dt + \sigma(t) dW(t). 
\end{align}

\noindent Therefore, $Y(t) = \ln N(t)$, satisfies the SDE in Equation (\ref{eq:mod1}).\\

\noindent This can alternatively be expressed in the equivalent integral form as
\begin{align} Y(t) &= Y_0 + \int_{0}^{t} \left ( b_{\boldsymbol{I}} - \frac{1}{2} \sigma^2 \right ) ds + \int_{0}^{t} \sigma dW_s \\
\label{eq:3mod} &= Y_0 + \left (b_{\boldsymbol{I}} - \frac{1}{2} \sigma^2 \right ) t + \sigma W_t. 
\end{align}

\noindent From Equation (\ref{eq:3mod}), we can conclude that $Y(t) \sim N(Y_0 + (b_{\boldsymbol{I}} - \frac{1}{2} \sigma^2 ) t , \sigma^2 t)$. From this we obtain a solution to Equation (\ref{eq:mod1}), which is represented as 

\begin{equation*} N(t) = N_0 \exp \left [ \left ( b_{\boldsymbol{I}} - \frac{1}{2} \sigma^2 \right ) t + \sigma W_t \right ]. \end{equation*}

\noindent Hence, $N(t)$ has a {\emph{lognormal distribution}} with expected value the knowledge that $ \mathbb{E} [e^{\sigma W_t}] = e^{\frac{1}{2} \sigma^2 t}$, given by:
\begin{equation*} \mathbb{E} [N(t)] = N_0 \exp [b_{\boldsymbol{I}} t]. \end{equation*} 

\noindent From Equation (\ref{eq:3mod}), one obtains the asymptotic result $Y(t) \sim (b_{\boldsymbol{I}} - \frac{1}{2} \sigma^2 ) t$ as $t \rightarrow +\infty$. Therefore, as $t \rightarrow \infty$, $N(t) \rightarrow \infty$ or $N(t) \rightarrow 0$ according to whether the ``average" growth rate $b_{\boldsymbol{I}}$ is larger than $\frac{\sigma^2}{2}$ or smaller than $\frac{\sigma^2}{2}$.

\subsubsection{Stratonovich Model}

Let us consider the density-independent Stratonovich calculus model
\begin{equation} \label{eq:1stratmod} dN(t) = b_{\boldsymbol{S}} N dt + \sigma (t) N(t) dW(t). 
\end{equation}

\noindent Since Stratonovich calculus obeys the ordinary calculus rules, we have by the ordinary chain rule of differentiation
\begin{align} dY(t) = d \ln N(t) &= \frac{d \ln N(t)}{d N(t)} dN(t)\\
&= \frac{1}{N(t)} dN(t) \\
\label{eq:2stratmod} &= b_{\boldsymbol{S}} dt + \sigma dW(t).
\end{align} 

\noindent Therefore, $Y(t) = \ln N(t)$ satisfies the SDE given in Equation (\ref{eq:2stratmod}). \\

\noindent This can alternatively be expressed in the equivalent integral form as
\begin{equation} \label{eq:integral} Y(t) = Y_0 + \int_{0}^{t} b_{\boldsymbol{S}} ds + \int_{0}^{t} \sigma dW_s. \end{equation}

\noindent From which one immediately obtains the solution
\begin{equation} \label{eq:3stratmod} Y(t) = Y_0 + b_{\boldsymbol{S}} t + \sigma W_t. \end{equation}

\noindent Since the integrand is constant, in this density-independent case, the It\^o and Stratonovich integrals coincide. For both approaches, we have $\int_{0}^{t} \sigma W_t = \sigma (W_t - W_0) = \sigma W_t$ since $W_0 = 0$.

\noindent Since $W(t) \sim N(0,t)$ (i.e. normally distributed with mean zero and variance $t$), we conclude that $ Y(t) \sim N(Y_0 + b_{\boldsymbol{S}}, \sigma^2 t)$. From this we obtain a solution to Equation (\ref{eq:1stratmod}), which is given by

\begin{equation*} N(t) = N_0 \exp \left [ b_{\boldsymbol{S}} t + \sigma W_t \right ]. \end{equation*}

\noindent Hence, $N(t)$ has a {\emph{lognormal distribution}} with expected value given by:

\begin{equation*} \mathbb{E} [N(t)] = N_0 \exp \left [ \left ( b_{\boldsymbol{S}} + \frac{1}{2} \sigma^2 \right ) t \right ]. \end{equation*}

\noindent From Equation (\ref{eq:3stratmod}), one obtains the asymptotic result $ Y(t) \sim b_{\boldsymbol{S}} t$ as $t \rightarrow +\infty$ (since $\frac{W_t}{t} \rightarrow 0$ as $t \rightarrow \infty$). Therefore, as $t \rightarrow \infty$, $N(t) \rightarrow \infty$ (i.e. growth without bound) or $N(t) \rightarrow 0$ (extinction) according to whether the ``average" growth rate $b_{\boldsymbol{S}}$ is positive or negative.

\subsubsection{Conclusion}

The long-term behaviour of $N(t)$ for both interpretations of the SDE can be further analysed by examining the trajectory of $N(t)$ in probability. Since, $\frac{W_t}{t} \rightarrow 0$ a.s. when $t \rightarrow \infty$, one easily notices that under Stratonovich calculus, $N(t) \rightarrow \infty$ when $b_{\boldsymbol{S}} > 0$ (probability of extinction is zero and there is a stochastic equilibrium) and $N(t) \rightarrow 0$ when $b_{\boldsymbol{S}} < 0$ (i.e. population extinction occurs with probability one). \\

\noindent In a similar fashion, under It\^o calculus, $N(t) \rightarrow \infty$ when $b_{\boldsymbol{I}} > \frac{\sigma^2}{2}$ and $N(t) \rightarrow 0$ when $b_{\boldsymbol{I}} < \frac{\sigma^2}{2}$. \\ 

\noindent The differences between the It\^o and Stratonovich approaches are now apparent. The behaviour appears to be different from the Stratonovich calculus. Hence, if one employs It\^o instead of Stratonovich, the conditions for non-extinction and existence of a stochastic equilibrium are qualitatively different. This illustrates the consequences of the two approaches for the population behaviour. Using Stratonovich calculus, extinction would occur a.s. if the ``average" growth rate $b < 0$, but with It\^o calculus, one can have extinction a.s. even for positive values of the ``average" growth rate $b$ if $b < \frac{\sigma^2}{2}$. \\

\noindent Furthermore, the It\^o calculus obtains different results compared to the deterministic model, this makes It\^o calculus quite popular in modelling, hence, avoiding the issue of ignoring random environmental fluctuations. \\

\noindent The approach taken here works for all density-dependent models and completely and exactly elucidates the difference between the two interpretations. It also exactly solves the problem of which calculus to use and how to use it.

\begin{align*} {\textnormal{It\^o SDE:}} \qquad \qquad N(t) &= b_{\boldsymbol{I}} N dt + \sigma N dW_t \\
d\ln N(t) &= \left ( b_{\boldsymbol{I}} - \frac{1}{2} \sigma^2 \right ) dt + \sigma dW_t \\
{\textnormal{Stratonovich SDE:}} \qquad \qquad dN(t) &= b_{\boldsymbol{S}} N dt + \sigma N dW_t \\
d \ln N(t) &= b_{\boldsymbol{S}} dt + \sigma dW_t 
\end{align*}
 
\subsubsection{Resolution of which Average to use}

\noindent {\bf{6.2.4.1 It\^o}} \\

\noindent Let us compute the two averages for the It\^o SDE (\ref{eq:mod1}), we obtain from (\ref{eq:3stratmod}),

\begin{equation*} Y(t+\Delta t) = \ln x + \left ( b_{\boldsymbol{I}} - \frac{1}{2} \sigma^2 \right ) \Delta t + \sigma ( W(t+\Delta t) - W(t)).
\end{equation*}

\noindent Therefore, $Y(t+\Delta t)$ is normally distributed with mean $\ln x + (b_{\boldsymbol{I}} - \frac{\sigma^2}{2})\Delta t$ and variance $\sigma^2 \Delta t$. The conditional expectation is
\begin{align*} \mathbb{E}_{t,x} [Y(t+\Delta t)] &= \mathbb{E}_{t,x} [\ln N(t+ \Delta t)]\\
                                               &= \ln x + \left ( b_{\boldsymbol{I}} - \frac{\sigma^2}{2} \right )\Delta t.
\end{align*}

\noindent Replacing into equation (\ref{eq:geometric1}), we obtain

\begin{align*} R_g(x,t) &= \frac{1}{x} \lim_{\Delta t \downarrow 0} \frac{\exp (\ln x + (b_{\boldsymbol{I}} - \frac{\sigma^2}{2})\Delta t) - x}{\Delta t}\\
&= \lim_{\Delta t \downarrow 0} \frac{\exp [(b_{\boldsymbol{I}} - \frac{\sigma^2}{2})\Delta t] - 1}{\Delta t}\\
&\equiv b_{\boldsymbol{I}} - \frac{1}{2} \sigma^2 \equiv b_{\boldsymbol{S}}.
\end{align*}

\noindent We also notice that $N(t+\Delta t) = \exp(Y(t+\Delta t))$ is lognormal with parameters $\ln x + (b_{\boldsymbol{I}} - \frac{\sigma^2}{2})\Delta t$ and $\sigma^2 \Delta t$, and so its conditional expectation is

\begin{align*} \mathbb{E}_{t,x} [N(t+\Delta t)] &= \mathbb{E}_{t,x} [\exp(Y(t+\Delta t)]\\
&= \mathbb{E}_{t,x} \left [\exp \left (\ln x + \left (b_{\boldsymbol{I}} - \frac{\sigma^2}{2} \right ) \Delta t + \sigma (W(t+\Delta t) - W(t)) \right ) \right ]\\
&=\exp \left (\ln x + \left ( b_{\boldsymbol{I}} - \frac{\sigma^2}{2} \right ) \Delta t + \frac{1}{2} \sigma^2 \Delta t \right )\\
&= x\exp(b_{\boldsymbol{I}} \Delta t).
\end{align*}
 
\noindent Replacing into equation (\ref{eq:arithmetic}), we obtain

\begin{equation*} R_a(x,t) = \frac{1}{x} \lim_{\Delta t \downarrow 0} \frac{x\exp(b_{\boldsymbol{I}} \Delta t) - x}{\Delta t} \equiv b_{\boldsymbol{I}}.
\end{equation*}

\noindent The conclusion is that
\begin{align*} R_g(x,t) &\equiv b_{\boldsymbol{I}} - \frac{\sigma^2}{2}, \\
               R_a(x,t) &\equiv b_{\boldsymbol{I}}.
\end{align*}
               
\noindent Hence, when using It\^o calculus, the ``average" growth rate $b_{\boldsymbol{I}}$ is specified as the {\bf{arithmetic average}} growth rate.\\

\noindent {\bf{6.2.4.2 Stratonovich}} \\

\noindent Let us compute these two averages for the Stratonovich SDE model (\ref{eq:1stratmod}). Since $N(t) = x$, we obtain from (\ref{eq:3stratmod}),

\begin{equation*} 
\label{eq1} Y(t + \Delta t) = \ln x + b_{\boldsymbol{S}} \Delta t + \sigma ( W(t + \Delta t) - W(t)).
\end{equation*}

\noindent Therefore, $Y(t+ \Delta t) \sim N(\ln x + b_{\boldsymbol{S}} \Delta t, \sigma^2 \Delta t)$ and so its conditional expectation is 

\begin{equation*}
\mathbb{E}_{t,x} [ Y(t+\Delta t)] = \mathbb{E}_{t,x} [\ln N(t + \Delta t)] = \ln x + b_{\boldsymbol{S}} \Delta t.
\end{equation*}

\noindent Replacing into equation (\ref{eq:geometric1}), we obtain

\begin{align*} 
R_g(x,t) &= \frac{1}{x} \lim_{\Delta t \downarrow 0} \frac{\exp (\ln x + b_{\boldsymbol{S}} \Delta t) - x}{\Delta t}\\
        &= \frac{1}{x} \lim_{\Delta t \downarrow 0} \frac{xe^{b_{\boldsymbol{S}} \Delta t} - x}{\Delta t}\\
        &= \frac{1}{x} \lim_{\Delta t \downarrow 0} \frac{x(e^{b_{\boldsymbol{S}} \Delta t} - 1)}{\Delta t} \\
        &= b_{\boldsymbol{S}}.
\end{align*}
        
\noindent $N(t+ \Delta t) = \exp(Y (t+\Delta t))$ is lognormally distributed with parameters $\ln x + b_{\boldsymbol{S}} \Delta t$ and $\sigma^2 \Delta t$, hence its conditional expectation is

\begin{align*} \mathbb{E}_{t,x} [ N(t + \Delta t)] &= \mathbb{E}_{t,x} [\exp (Y (t + \Delta t)]\\
&= \mathbb{E}_{t,x} [\exp (\ln x + b_{\boldsymbol{S}} \Delta t + \sigma (W(t+\Delta t) - W(t)))] \\
&= \exp \left ( \ln x + b_{\boldsymbol{S}} \Delta t + \frac{1}{2} \sigma^2 \Delta t \right ) \\
&= x\exp \left ( \left (b_{\boldsymbol{S}} + \frac{1}{2} \sigma^2 \right ) \Delta t \right ),
\end{align*}

\noindent where $\mathbb{E}_{t,x}[e^{\sigma (W(t+\Delta t) - W(t))}] = e^{\frac{1}{2} \sigma^2 \Delta t}$, since $W(t+  \Delta t) - W(t) \sim N(0,\Delta t)$.

\noindent Replacing this into equation (\ref{eq:arithmetic}), we obtain

\begin{align*} R_a(x,t) &= \frac{1}{x} \lim_{\Delta t \downarrow 0} \frac{x\exp((b_{\boldsymbol{S}} + \frac{1}{2} \sigma^2) \Delta t) - x}{\Delta t} \\
&= \lim_{\Delta t \downarrow 0} \frac{\exp[(b_{\boldsymbol{S}} + \frac{1}{2} \sigma^2) \Delta t] - 1}{\Delta t} \\
&= b_{\boldsymbol{S}} + \frac{1}{2}\sigma^2 \Delta t.
\end{align*}

\noindent The conclusion is that

\begin{align*} R_g(x,t) &\equiv b_{\boldsymbol{S}},\\
              R_a(x,t) &\equiv b_{\boldsymbol{S}} + \frac{1}{2} \sigma^2. 
\end{align*}
               
\noindent Hence, when using Stratonovich calculus, the ``average" growth rate $b_{\boldsymbol{S}}$ is specified as the {\bf{geometric average}} growth rate.\\

\noindent {\bf 6.2.4.3 {Conclusion}} \\

\noindent The conclusion is that when using It\^o calculus, the ``average" growth rate, $b_{\boldsymbol{I}}$ is specified as the {\bf{arithmetic average}} growth rate, and when using Stratonovich calculus, the ``average" growth rate, $b_{\boldsymbol{S}}$ is specified as the {\bf{geometric average}} growth rate\\

\noindent This fact instructs us to replace the unspecified growth rate $b$ by the specified average it truly represents. It is only in this manner that the results acquire meaning. Since $b_{\boldsymbol{S}}$ is indeed the geometric average growth rate $R_g$, we can conclude that the solution of the Stratonovich SDE density-independent growth model is 

\begin{equation*} N(t) = N_0 \exp [R_g t + \sigma W(t)]. \end{equation*}

\noindent Since $b_{\boldsymbol{I}}$ is indeed the arithmetic average growth rate $R_a$, we can conclude that the solution of the It\^o SDE density-independent growth model is

\begin{equation*} N(t) = N_0 \exp \left [ \left (R_a - \frac{\sigma^2}{2} \right ) t + \sigma W(t) \right ], \end{equation*}

\noindent or, alternatively it was shown that $R_a - \frac{\sigma^2}{2} = R_g$, yielding

\begin{equation} N(t) = N_0 \exp[R_g t + \sigma W(t)]. \end{equation}

\noindent Hence, we can conclude that the two interpretations yield {\emph{exactly the same}} solutions in terms of a specific average growth rate. Therefore, it does not matter which average we choose as long as it is clearly specified. With regard to the conditions under which extinction occurs, we conclude that both approaches predict population extinction or a stochastic equilibrium according to whether the geometric average growth rate is negative or positive. So, we can use either calculus indifferently as long as we are careful to use $b$ for the appropriate average for that calculus.\\

\noindent Once $R_a(x,t)$ or $R_g(x,t) = R_a(x,t) - \frac{\sigma^2 (t)}{2}$ have been estimated, one can choose to estimate $b_{\boldsymbol{I}}(t) = R_a(x,t) = R_g(x,t) + \frac{\sigma^2 (t)}{2}$ and use It\^o calculus or choose to estimate $b_{\boldsymbol{S}}(t) = R_g(x,t) = R_a(x,t) - \frac{\sigma^2 (t)}{2}$ and use Stratonovich calculus. \\

\noindent It does not matter what choice one makes, the solution one obtains is the same.
Therefore, the It\^o and Stratonovich SDEs can be written in terms of these estimated quantities, $R_a(x,t)$ and $R_g(x,t)$:

\begin{align*}
\textnormal{It\^o SDE:} \qquad \frac{1}{N} \frac{dN}{dt} &= R_a(x,t) + \sigma (t) \xi (t) = R_g(x,t) + \frac{\sigma^2}{2} + \sigma (t) \xi (t) \\
\textnormal{Stratonovich SDE:} \qquad \frac{1}{N} \frac{dN}{dt} &= R_g(x,t) + \sigma (t) \xi (t) = R_a(x,t) - \frac{\sigma^2}{2} + \sigma (t) \xi (t)
\end{align*}

\noindent which are equivalent.

\subsection{Density-Dependent Growth}

\noindent We can interpret (\ref{eq:model}) as an It\^o SDE and it can be written as:

\begin{equation}
\label{eq:modelIto}
d N(t) = N(t) \Big[ b_{\boldsymbol{I}}(t)dt + \sigma(t) dW_t \Big].
\end{equation}

\noindent By the It\^o-Stratonovich conversion formula given in Equation (\ref{eq:convert1}), where $a(N_t,t) = b(t) N(t)$, $\sigma(N_t,t) = \sigma(t) N(t)$ and $\frac{\partial \sigma}{\partial x} (N_t,t) = \sigma(t)$. \\

Therefore the It\^o SDE is equivalent to the Stratonovich SDE

\begin{align}
\label{eq:modelStratcon}
dN(t) &= \left [ b_{\boldsymbol{I}}(t)N(t) - \frac{1}{2} \sigma^2 (t) N(t) \right ]dt + \sigma (t) N(t) dW_t, \\
dN(t)&= N(t) \left [ b_{\boldsymbol{I}}(t) - \frac{\sigma^2 (t)}{2} \right ]dt + \sigma (t) N(t) dW_t\\
&= b_{\boldsymbol{S}} N(t)dt + \sigma (t) N(t) dW_t.
\end{align}

\noindent This can be written in terms of the per capita average growth rate as
\begin{align*} 
\frac{1}{N} \frac{dN(t)}{dt} &= \left ( b_{\boldsymbol{I}}(t) - \frac{\sigma^2 (t)}{2} \right ) + \sigma (t) \xi (t)\\
&= b_{\boldsymbol{S}} + \sigma (t) \xi (t).
\end{align*}

\noindent This is similar to Equation (\ref{eq:model}) but simply interpreted as a Stratonovich SDE, where $b(t) \equiv b_{\boldsymbol{I}} (t)$ is replaced by $b(t) - \frac{\sigma^2 (t)}{2} \equiv b_{\boldsymbol{S}}$. Whether we interpret equation (\ref{eq:model}) as an It\^o or a Stratonovich SDE, the solution is a homogeneous diffusion process with diffusion coefficient (variance rate), $N^2\sigma^2$ (which is the same in both the It\^o and Stratonovich SDEs). The drift coefficient is, however, different; it is respectively for It\^o and Stratonovich:

\begin{align*}
\textnormal{It\^o}: \quad \mu(N) &= N(t)b_{\boldsymbol{I}}(t) \\
\textnormal{Stratonovich}: \quad \mu(N) &= N(t) \left [ b_{\boldsymbol{I}}(t) + \frac{\sigma^2 (t)}{2} \right ]\\
&= N(t)b_{\boldsymbol{S}} (t).
\end{align*}

\noindent To conclude, we have:

\begin{align*}
\noindent \textnormal{It\^o SDE}: \frac{1}{N(t)} \frac{dN(t)}{dt} &= b_{\boldsymbol{I}}(t) + \sigma (t) \xi (t) \\
\quad dN(t) &= N(t)b_{\boldsymbol{I}}(t)dt + \sigma(t)N(t)dW(t), \\
\noindent \textnormal{Stratonovich SDE}: \frac{1}{N(t)} \frac{d N(t)}{dt} &= b_{\boldsymbol{S}}(t) + \sigma(t) \xi (t)\\
\quad dN(t) &= N(t)b_{\boldsymbol{S}}(t)dt + \sigma(t)N(t)dW(t).
\end{align*}

\subsubsection{Deterministic Model}

\noindent We need to clarify what the growth rate and ``average" growth rate mean in terms of the observed population dynamics $N(t)$. Let us consider the deterministic model, where $\sigma = 0$, 

\begin{equation}
\label{eq:deter}
dN(t) = b_d (t) N(t) dt.
\end{equation}

\noindent We can define the growth rate (i.e. the per capita growth rate) at time $t$ when the observed population size at time $t$ is $x$, i.e. $N(t) = x$, as

\begin{align}
b_d(x) :&= \frac{1}{x} \frac{dN(t)}{dt} \\
&= \frac{1}{x} \lim_{\Delta t \downarrow 0} \frac{N(t+\Delta t) - N(t)}{\Delta t}\\
\label{eq:deter1} &= \frac{1}{x}\lim_{\Delta t \downarrow 0} \frac{N(t+\Delta t) - x}{\Delta t}.
\end{align}

\noindent The limit part represents the total growth rate, $\frac{dN(t)}{dt}$, at time $t$, which is then divided by the population size $x$ to obtain the per capita growth rate. In this density-independent case, $b_d(x)$ does not depend on the population size $x$, however, we adopt the notation $b_d (x)$ since the dependence will be present in the more general density-dependence case.\\

\noindent Alternatively, we could have obtained the solution of (\ref{eq:deter}) as $N(t) = N_0 \exp (b_d t)$, where this gives $N(t + \Delta t) = N(t) \exp (b_d \Delta t)$, and substituting into equation (\ref{eq:deter1}) to obtain 

\begin{equation*}
\frac{1}{x} \lim_ {\Delta t \rightarrow 0} \frac{ x (e^{b_d \Delta t} - 1)}{\Delta t}.
\end{equation*} 

\noindent Furthermore, we know that as $\Delta t \downarrow 0$, $ \frac{e^{b_d \Delta t} - 1}{\Delta t}$ tends to $b_d$, to arrive at the same solution.\\

\noindent However, in the stochastic model (\ref{eq:model}), i.e. $\sigma \neq 0$, $N(t + \Delta t)$ is a random variable and it is then necessary to take some kind of average of $N(t + \Delta t)$, to obtain a possible estimate of the population size at time $t + \Delta t$. One possible way is to take the limit and determine the average afterwards, however, this is shown not to work since the limit itself is a generalised stochastic process and does not exist in the ordinary sense. So, instead, we follow the other approach by first taking the average and then computing the limit afterwards. But we have to be precise on what type of average we are using.

\subsubsection{It\^o Model}

\noindent Let us see what these two averages turn out to be, under the It\^o model. It is sometimes more convenient to work with the log transformed stochastic process $Y(t) = \ln N(t)$ and with $y = \ln (x)$. Using It\^o's rule of calculus, one obtains, using the fact that in the limit as $dt$ tends to 0 ($dt \rightarrow 0$), $(dt)^2 = 0$, $(dt)(dW_t) = 0$ and $(dW_t)^2 = dt$:

\begin{equation} 
dN(t) = b_{\boldsymbol{I}}(t) N(t) dt + \sigma (t) N(t) dW (t).
\end{equation}

\noindent This It\^o model has drift coefficient $b_{\boldsymbol{I}}(t)N(t)$ and diffusion coefficient $\sigma^2 (t) N^2 (t)$, where the growth rate $b_{\boldsymbol{I}}(t)$ is given by;

\begin{equation*}
b_{\boldsymbol{I}}(t) = \frac{1}{x} \lim_{\Delta t \downarrow 0}\frac{\mathbb{E}_{t,x}[\ N(t+\Delta t)]-x}{\Delta t} = R_a(x,t).
\end{equation*}

\noindent The arithmetic average growth rate is then given by $R_a(x,t) \equiv b_{\boldsymbol{I}}(t)$. \\

\noindent To compute the geometric average growth rate, let us consider the log scale $Y(t) = \ln N(t)$, then using It\^o's lemma, we obtain the SDE (\ref{eq:log}):

\begin{align} dY = d\ln N(t) &= \frac{d\ln N(t)}{dN(t)} dN(t) + \frac{1}{2} \frac{d^2\ln N(t)}{dN(t)^2} (dN(t))^2\\
                 &= \frac{1}{N(t)} \left [ N(t)b_{\boldsymbol{I}}(t)dt + \sigma(t)N(t)dW(t) \right ] + \frac{1}{2} \left ( \frac{-1}{N(t)^2} \right ) \left [ \sigma^2(t)N^2 (t) dt \right ]\\
 \label{eq:log}                &= \left [ b_{\boldsymbol{I}}(t) - \frac{1}{2} \sigma^2 (t) \right ] dt + \sigma(t)dW(t).
                 \end{align}

\noindent In terms of $Y(t) = \ln N(t)$, $y = \ln (x)$, the solution of equation (\ref{eq:log}) has drift coefficient $b_{\boldsymbol{I}}(t) - \frac{1}{2} \sigma^2 (t)$ and diffusion coefficient $\sigma^2 (t)$, where the growth rate $b_{\boldsymbol{I}}(t) - \frac{1}{2} \sigma^2 (t)$ is given by:

\begin{equation*}
b_{\boldsymbol{I}}(t) - \frac{1}{2} \sigma^2 (t) = \lim_{\Delta t \downarrow 0}\frac{\mathbb{E}_{t,x}[\ Y(t+\Delta t)]-y}{\Delta t} = R_g(x,t).
\end{equation*}

\noindent Therefore, the geometric average growth rate is given by

\begin{equation*}
R_g(x,t) = \lim_{\Delta t \downarrow 0}\frac{\mathbb{E}_{t,x}[\ \ln N(t+\Delta t)]-\ln x}{\Delta t} = b_{\boldsymbol{I}}(t) - \frac{1}{2} \sigma ^2 (t) = b_{\boldsymbol{S}} (t).
\end{equation*}

\noindent Hence, for the It\^o SDE, we have that
\begin{align*} R_a(x,t) &= b_{\boldsymbol{I}}(t),\\
               R_g(x,t) &= b_{\boldsymbol{I}}(t) - \frac{\sigma^2}{2} = b_{\boldsymbol{S}} (t).
               \end{align*}

\noindent These are respectively the arithmetic average growth rate (the expected value of average {\emph{w.r.t}} the process $N(t)$ as per the definition of the average (\ref{eq:arithmetic})) and the geometric average growth rate defined by (\ref{eq:geometric1}) for the solution of the It\^o SDE.
              
\noindent By the definition (\ref{eq:geometric1}) of the geometric average, $R_g(x,t)$ is the average {\emph{w.r.t}} the process $\ln N(t)$, so by determining the dynamics of $\ln N(t)$ we can obtain the drift rate which gives the per capita geometric growth average. 

\noindent To conclude $b_{\boldsymbol{I}}(x,t)$ is the {\bf{arithmetic average}} growth rate
\begin{equation*} 
R_a(x,t) := \frac{1}{x} \lim_{\Delta t \downarrow 0}\frac{\mathbb{E}_{t,x}[\ N(t+\Delta t)]-x}{\Delta t};
\end{equation*}

\noindent and $b_{\boldsymbol{I}}(t) - \frac{\sigma^2 (t)}{2} = b_{\boldsymbol{S}}$ is the {\bf{geometric average}} growth rate
\begin{equation*} 
R_g(x,t) := \lim_{\Delta t \downarrow 0}\frac{\mathbb{E}_{t,x}[\ \ln N(t+\Delta t)]-\ln x}{\Delta t}.
\end{equation*}

\noindent Therefore, $R_g(x,t) = R_a(x,t) - \frac{\sigma^2 (t)}{2}$.

\subsubsection{Stratonovich Model}

\noindent For the Stratonovich model, we make use of an easier approach which is to convert the Stratonovich SDE to an equivalent It\^o SDE using the It\^o-Stratonovich conversion formula (\ref{eq:convert2}):
\begin{align*}
\textnormal{Stratonovich SDE}: \frac{1}{N} \frac{dN(t)}{dt} &= b_{\boldsymbol{S}}(t) + \sigma(t) \xi (t), \\ 
\label{eq:Strateq} dN(t) &= N(t)b_{\boldsymbol{S}}(t)dt + \sigma(t) N(t)dW(t)\\
\textnormal{equivalent It\^o SDE}: \frac{1}{N(t)} \frac{dN (t)}{dt} &= b_{\boldsymbol{S}}(t) + \frac{\sigma^2 (t)}{2} + \sigma (t) \xi(t) \\
dN(t) &= N(t)b_{\boldsymbol{S}}(t)dt + N(t) \left (\frac{\sigma^2 (t)}{2} \right ) dt + \sigma(t) N(t) dW(t), \\
&= N(t) \left [ b_{\boldsymbol{S}}(t) + \frac{\sigma^2(t)}{2} \right ] dt + \sigma (t) N(t) dW(t).
\end{align*}

\noindent The solution of the Stratonovich SDE is a diffusion process with drift coefficient $ \left (b_{\boldsymbol{S}}(t) + \frac{\sigma^2 (t)}{2} \right ) N(t)$ and diffusion coefficient $\sigma^2 (t) N^2 (t)$, which is identical to the diffusion coefficient of the It\^o SDE.\\

\noindent If we now consider the transformation $Y(t) = \ln N(t)$, which is instrumental in these deductions. Using It\^o's Lemma  we obtain
\begin{align} dY(t) &= \frac{d\ln N(t)}{dN(t)} dN(t) + \frac{1}{2} \frac{d^2\ln N(t)}{dN(t)^2} (dN(t))^2,\\
&= \frac{1}{N(t)} \left [ N(t) \left ( b_{\boldsymbol{S}}(t) + \frac{\sigma^2 (t)}{2} \right )dt + \sigma(t)N(t)dW(t) \right ] + \frac{1}{2} \left ( \frac{-1}{N(t)^2} \right ) \left [ \sigma^2(t)N^2 (t) dt \right ],\\
\label{eq:Strateq1} &= b_{\boldsymbol{S}}(t)dt + \sigma(t)dW(t).
                 \end{align}
                 
\noindent The solution of equation (\ref{eq:Strateq1}) has drift $b_{\boldsymbol{S}}(t)$ and diffusion coefficient $\sigma^2 (t)$.\\

\noindent Alternatively, we could compute the geometric average growth rate by starting from the Stratonovich SDE and using the ordinary chain rule of differentiation, we obtain the SDE:

\begin{align*}
d \ln N(t) &= \frac{d \ln N(t)}{d N(t)} dN(t) \\
           &= \frac{1}{N(t)} \left [ N(t) b_{\boldsymbol{S}} (t) dt + \sigma (t) N(t) dW(t) \right ] \\
           &= b_{\boldsymbol{S}}(t)dt + \sigma (t) dW (t).
\end{align*}  

\noindent This can indifferently be interpreted as an It\^o or a Stratonovich SDE. Since, the stochastic term has a constant coefficient, the correction term in the conversion method is now zero and the two approaches coincide.
                 
\noindent Therefore, using a similar reasoning as explained with the It\^o SDE, for the Stratonovich SDE, we have 

\begin{align*} R_g(x,t) &= b_{\boldsymbol{S}}(t),\\
                        &= b_{\boldsymbol{I}}(t) - \frac{1}{2} \sigma^2\\
               R_a(x,t) &= b_{\boldsymbol{S}}(t) + \frac{\sigma^2}{2}\\
                        &= b_{\boldsymbol{I}}(t).
               \end{align*}

\noindent These are respectively the geometric average growth rate and the arithmetic average growth rate for the solution of the Stratonovich SDE. This proves that the average used in Stratonovich calculus is the geometric average and the average used in It\^o calculus is the arithmetic average. Hence, from these results, we reach the following final conclusions:

\noindent To conclude the arithmetic average growth rate is given by

\begin{equation*} 
R_a(x,t) := \frac{1}{x} \lim_{\Delta t \downarrow 0}\frac{\mathbb{E}_{t,x}[\ N(t+\Delta t)]-x}{\Delta t} = b_{\boldsymbol{S}}(t) + \frac{\sigma^2}{2},
\end{equation*}

\noindent and the geometric average growth rate is given by
\begin{equation*}
R_g(x,t) := \lim_{\Delta t \downarrow 0}\frac{\mathbb{E}_{t,x}[\ \ln N(t+\Delta t)]-\ln x}{\Delta t} = b_{\boldsymbol{S}}(t).
\end{equation*}

\noindent Again, for the Stratonovich SDE, we also have that $R_g(x,t) = R_a(x,t) - \frac{\sigma^2}{2}$.

\section{Conclusion}

\noindent Under {\bf{It\^o calculus}}, we interpret the growth rate as the {\bf{arithmetic average}} growth rate, $R_a(x,t)$ defined by equation (\ref{eq:arithmetic}).\\

\noindent Under {\bf{Stratonovich calculus}}, we interpret the growth rate as the {\bf{geometric average}} growth rate, $R_g(x,t)$ defined by equation (\ref{eq:geometric1}).

\noindent Therefore, for It\^o calculus, $b(t)$ really means the arithmetic average growth rate, $R_a(x,t)$ or equivalently $R_g(x,t) + \frac{\sigma^2}{2}$ and for Stratonovich calculus, $b(t)$ really means the geometric average growth rate, $R_g(x,t)$. \\

\noindent It is shown, and finally concluded, in Braumann (2007) that both calculus lead to the exact same conclusions in terms of the conditions under which population extinction or the existence of a stochastic equilibrium occur. Since if one takes into account the difference $\frac{\sigma^2}{2}$ between the two averages, the solutions under both approaches coincide. \\

\noindent Hence, after clearing the confusion, It\^o and Stratonovich calculus yield the same results. It is now possible to easily tackle the major obstacle to the use of these SDE models, however, care must be taken in using the appropriate type of average for each calculus If that care is indeed taken, It\^o and Stratonovich will give the same results and draw the same conclusions.


\begin{thebibliography}{99}

\bibitem{B2007} Braumann  C. A., (2007). \emph{Harvesting in a random environment: It\^o or Stratonovich calculus?} J. Theoret. Biol. 244, no. 3, 424--432.
  
\bibitem{tnss} Fernholz, R., and  B. Shay, (1982). \emph{Stochastic portfolio theory and stock market equilibrium} Journal of Finance 37, 615--624.


\end{thebibliography}
\end{document}